\DeclarePairedDelimiter\floor{\lfloor}{\rfloor}
\newtheorem{theorem}{Theorem}
\newcommand{\sir}{\mathrm{SIR}}
\newcommand{\Pb}{\mathbb{P}}
\newcommand{\Eb}{\mathbb{E}}
\newcommand{\Lc}{\mathcal{L}}
\newcommand{\black}{\textcolor{black}}
\newcommand*{\acro}[3][]{\newacronym[#1]{#2}{#2}{#3}}
\begin{document}
\title{Caching to the Sky: Performance Analysis of Cache-Assisted CoMP for Cellular-Connected UAVs}

\author{\IEEEauthorblockN{Ramy Amer\IEEEauthorrefmark{1},			
Walid Saad\IEEEauthorrefmark{2},
Hesham ElSawy\IEEEauthorrefmark{3},
M. Majid Butt\IEEEauthorrefmark{1}\IEEEauthorrefmark{4},
Nicola Marchetti\IEEEauthorrefmark{1}}\\
\IEEEauthorblockA{\IEEEauthorrefmark{1}CONNECT, Trinity College, University of Dublin, Ireland}\\
\IEEEauthorblockA{\IEEEauthorrefmark{2}Wireless@VT, Bradley Department of Electrical and Computer Engineering, Virginia Tech, Blacksburg, VA, USA}\\
\IEEEauthorblockA{\IEEEauthorrefmark{3}King Fahd University of Petroleum and Minerals (KFUPM)}\\
\IEEEauthorblockA{\IEEEauthorrefmark{4}Nokia Bell Labs, France}\\
\IEEEauthorblockA{email:\{ramyr, majid.butt, nicola.marchetti\}@tcd.ie, walids@vt.edu, hesham.elsawy@kfupm.edu.sa}}

\maketitle
\begin{abstract}
Providing connectivity to aerial users, such as cellular-connected unmanned aerial vehicles (UAVs) or flying taxis, is a key challenge for tomorrow's cellular systems. In this paper, the use of coordinated multi-point (CoMP) transmission along with caching for providing seamless connectivity to aerial users is investigated. In particular, a network of clustered cache-enabled small base stations (SBSs) serving aerial users is considered in which a requested content by  an aerial user is cooperatively transmitted from collaborative ground SBSs. For this network, a novel upper bound expression on the coverage probability is derived as a function of the system parameters. The effects of various system parameters such as collaboration distance and content availability on the achievable performance are then investigated. Results reveal that, when the antennas of the SBSs are tilted downwards, the coverage probability of a high-altitude aerial user is upper bounded by that of a ground user regardless of the transmission scheme. Moreover, it is shown that for a low signal-to-interference-ratio (SIR) threshold, CoMP transmission improves the coverage probability for aerial users from $10\%$ to $70\%$ under a collaboration distance of \SI{200}{m}.
 
\end{abstract}
\begin{IEEEkeywords}
Cellular-connected UAVs, CoMP, caching, stochastic geometry. 
\end{IEEEkeywords}

\section{Introduction}	
The past few years have witnessed a tremendous increase in the use of \acp{UAV}, also known as drones, in many applications, such as aerial surveillance, package delivery, and even flying taxis \cite{8473483}. Enabling such \ac{UAV}-centric applications requires ubiquitous wireless connectivity that can be potentially provided by the popular wireless cellular network \cite{challita2018deep,mozaffari2018tutorial,mozaffari2016unmanned}. However, in order to operate such cellular-connected \acp{UAV} using existing wireless systems, one must address a broad range of challenges that include resource management, interference mitigation, power control, and energy efficiency \cite{zeng2018cellular15}.	
 
In particular, the altitude of a cellular-connected \ac{UAV} will be much higher than the typical terrestrial \ac{UE} and will significantly exceed the \ac{SBS} antenna height. Consequently, ground \acp{SBS} need to be able to provide 3D communication coverage. However, existing \acp{SBS} antennas are usually tilted downwards to cater to the ground coverage and reduce inter-cell interference. Preliminary field measurement results by Qualcomm have demonstrated adequate aerial coverage by \ac{SBS} antenna sidelobes for \acp{UAV} below \SI{120}{m} \cite{qualcomm2017unmanned}. However, as the \ac{UAV} altitude further increases, new solutions are needed to enable cellular \acp{SBS} to seamlessly cover the sky \cite{azari2017coexistence,azari2018reshaping}. 

The dominance of \ac{LoS} links makes inter-cell interference a critical issue for cellular systems with hybrid  terrestrial and aerial \acp{UE}.
In this regard, extensive real-world simulations and fields trials in \cite{zeng2018cellular15,lin2018sky}, and \cite{van2016lte} have shown that an aerial \ac{UE}, in general, has poorer downlink performance than a ground \ac{UE}. Due to the down-tilted \ac{SBS} antennas, it is found that \acp{UAV} at \SI{40}{m} and higher are served by the sidelobes of \ac{SBS} antennas, which have reduced antenna gain compared to the mainlobes of \ac{SBS} antennas serving ground \acp{UE}. However, \acp{UE} at \SI{40}{m} and above experience free space propagation conditions, while radio signals attenuate more quickly with distance on the ground. Interestingly, it is shown that free space propagation can make up for the \ac{SBS} antenna sidelobe gain reduction \cite{lin2018sky}. However, this merit from such a favorable  \ac{LoS} channel vanishes at high altitudes because the free space propagation also leads to stronger \ac{LoS} interfering signals. Eventually, aerial \acp{UE} at high altitudes are shown to always have poorer communication and coverage as opposed to ground \acp{UE} \cite{lin2018sky,zeng2018cellular15}, and \cite{van2016lte}. 
\black{Nevertheless, while interesting, these works explored the feasibility of providing cellular connectivity for \acp{UAV} without proposing new approaches to solve the ensuing \ac{LoS}-dominated interference problem.}




In \cite{azari2018reshaping}, the authors studied the feasibility of supporting drone operations using existent cellular infrastructure. The authors showed that carefully designed system parameters such as antenna radiation pattern and network density guarantee a satisfactory quality of service.
Meanwhile, the authors in \cite{zhang2018cellular55} considered a cellular-connected \ac{UAV} flying from an initial location to a final destination. The authors minimized the UAV's mission completion time by optimizing its trajectory while maintaining reliable communication with the ground cellular network.
\black{However, while the works in \cite{azari2017coexistence,azari2018reshaping}, and \cite{zhang2018cellular55} have analyzed the performance of cellular-connected \acp{UAV}, effectively mitigating the impact of \ac{LoS}
interference on contemporary aerial  \acp{UE} has not yet been addressed in the literature.}



Compared with this prior art \cite{zeng2018cellular15,qualcomm2017unmanned,azari2017coexistence,azari2018reshaping,
lin2018sky,van2016lte,zhang2018cellular55}, the main contribution of this paper is to develop a novel framework that leverages \ac{CoMP} transmissions for serving high-altitude cellular-connected \acp{UAV} while mitigating cross-cell interference and boosting the received \ac{SIR}. In particular, we consider a network of cache-enabled \acp{SBS} in which an aerial \ac{UE} downloads a previously cached content via \ac{CoMP} transmission from neighboring ground \acp{SBS}. Using tools from stochastic geometry, we derive a considerably tight upper bound on the content coverage probability as a function of the system parameters. We show that the achievable performance of an aerial user depends heavily on the collaboration distance, content availability, and target bit rate. Moreover, while allowing \ac{CoMP} transmission substantially improves the coverage probability for aerial \acp{UE}, it is shown that their performance is still upper bounded by that of ground \acp{UE} due to the down-tilt of the current SBSs' antennas.
To the best of our knowledge, this paper provides the first analysis of \ac{CoMP} transmission for cellular-connected \acp{UAV} in cache-enabled networks.

The rest of this paper is organized as follows. Section II and Section III present the system model and the coverage probability analysis, respectively. Numerical results are presented in Section IV and conclusions are drawn in Section V.


\section{System Model}
 \begin{figure} [!t]	
\centering
\includegraphics[width=0.40\textwidth]{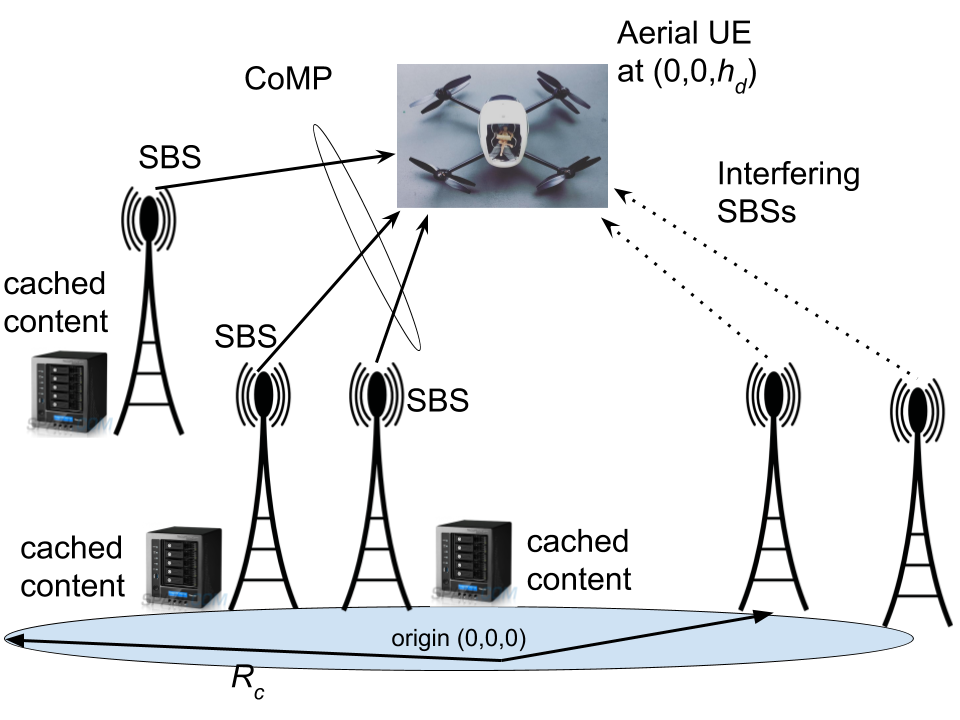}		
\caption {Illustration of the proposed system model.}
\label{system-model-comp}
\vspace{-0.4 cm}
\end{figure}

We consider a cache-enabled small cell network in which \acp{SBS} are distributed according to a homogeneous \ac{PPP} 
$\Phi_b=\{ b_i \in \mathbb{R}^2, \forall i \in \mathbb{N}^+ \}$ with intensity $\lambda_b$. We consider a cellular-connected aerial \ac{UE} flying at an altitude $h_d$ and located at $(0,0,h_d) \in \mathbb{R}^3$, where $h_d$ is the altitude of the \ac{UAV}. 
\black{We assume a user-centric model in which the \acp{SBS} are grouped into disjoint clusters around aerial \acp{UE} to be served \cite{8449213}. A cluster is represented by a circle of radius $R_c$ centered at the terrestrial projection of an aerial \ac{UE}, as shown in Fig.~\ref{system-model-comp}. The area of each cluster is then given by $A = \pi R_c^2$.}

\emph{\acp{SBS} belonging to the same cluster can cooperate to serve cached content} to the aerial \ac{UE} whose projection on the ground is assumed at the cluster center. 
The aerial \ac{UE} can represent a conventional cellular-connected UAV or a passenger of a flying drone-taxi \cite{8473483}.\footnote{A Chinese major drone maker company called Ehang Corp tested and completed more than 1,000 test drone flights with and without passengers.} 
Due to the strong  \ac{LoS}-dominated interference at high altitudes, we allow multiple SBSs within a certain distance from the aerial \ac{UE} to cooperatively transmit a requested content that they previously cached. 

\subsection{Probabilistic Caching Placement}
Each \ac{SBS} has a surplus memory designated for caching content from a known file library. These files represent the content catalog that an aerial \ac{UE} may request, and are indexed in a descending order of popularity. 
We adopt a random content placement policy in which each content $f$ is cached independently at each SBS according to a probability $c_f$, $0 \leq c_f \leq 1$. Note that \acp{SBS} caching content $f$ can be modeled as a \ac{PPP}  $\Phi_{bf}$ with the intensity function given by the independent thinning theorem as $\lambda_{bf}=c_f \lambda_b$ \cite{haenggi2012stochastic}. Similarly, \acp{SBS} not caching a content $f$ can be modeled as another \ac{PPP}  $\Phi_{bf}^{!}$ with intensity function $\lambda_{bf}^{\circ}=(1-c_f) \lambda_b$, where $\Phi_b = \Phi_{bf} \cup \Phi_{bf}^{!}$.
The \ac{PMF} of the number of \acp{SBS} caching content $f$ in a cluster is given by:
\begin{align}
\label{cond-k}
\Pb(n=\kappa)=\frac{(c_f\lambda_bA)^{\kappa}e^{-c_f\lambda_bA}}{\kappa!},
\end{align}
which represents a Poisson distribution with mean $c_f\lambda_bA$.

\subsection{Serving Distance Distributions}
\black{Under the condition of having $\kappa$ caching \acp{SBS} in the cluster of interest, the distribution of such in-cluster caching \acp{SBS} will follow a \ac{BPP}. This \ac{BPP} consists of $\kappa$ uniformly and independently distributed \acp{SBS} in the cluster.}
%

The set of cooperative \acp{SBS} providing content $f$ is defined as $\Phi_{cf} = \{b_i \in \Phi_{bf} \cap \mathcal{B}(0, R_c)\}$, where $\mathcal{B}(0, R_c)$ denotes the ball centered at the origin with radius $R_c$.
%
Considering the aerial \ac{UE} located at the origin in $\mathbb{R}^2$, i.e., $(0,0,h_d) \in \mathbb{R}^3$,  the 2D distances from the cooperative \acp{SBS} to the aerial \ac{UE} are denoted by $\boldsymbol{R}_{\kappa}= [R_1, \dots, R_{\kappa}]$. Then, conditioning on $\boldsymbol{R}_{\kappa} = \boldsymbol{r}_{\kappa}$, where $\boldsymbol{r}_{\kappa}= [r_1, \dots, r_{\kappa}]$, the conditional \ac{PDF} of the joint serving distances' distribution is denoted as $f_{\boldsymbol{R}_{\kappa}}(\boldsymbol{r}_{\kappa})$. 
The $\kappa$ cooperative \acp{SBS} that cache a content $f$ can be seen as the $\kappa$-closest \acp{SBS} to the cluster center from the PPP $\Phi_{bf}$. Since the $\kappa$ \acp{SBS} are independently and uniformly distributed in the cluster approximated by $\mathcal{B}(0, R_c)$, we have the \ac{PDF} of the horizontal distance $r_i$ from \ac{SBS} $i$ to the aerial \ac{UE} at $(0,0,h_d)$ given as \cite{haenggi2012stochastic}
\[
    f_{R_i}(r_i)=\left\{
                \begin{array}{ll}
                  \frac{2r_i}{R_c^2}, \quad\quad\quad 0\leq r_i\leq R_c,\\ 
                 0, \quad\quad\quad\quad {\rm otherwise},
                \end{array}   
              \right.
  \]
for any $i \in \mathcal{K}_f = \{1, \dots, \kappa\}$, where $\mathcal{K}_f$ is the set of SBSs that cache a content $f$ within the ball $\mathcal{B}(0, R_c)$. \textcolor{black}{From the i.i.d. property of \ac{BPP}, the conditional joint \ac{PDF} of the serving distances $\boldsymbol{R}_{\kappa}= [R_1, \dots, R_{\kappa}]$ is}
\begin{align}
\label{serv-dist}
f_{\boldsymbol{R}_{\kappa}}(\boldsymbol{r}_{\kappa})=\prod_{i=0}^{\kappa} \frac{2r_i}{R_c^2}.
\end{align}

We consider a content delivery from ground \acp{SBS} having the same height $h_{\textrm{SBS}}$ to an aerial \ac{UE} located at altitude $h_d$. The SBS vertical antenna pattern is directional and down-tilted for ground \acp{UE}. The vertical antenna beamwidth and down-tilt angle of the SBSs are denoted respectively by $\theta_B$ and $\theta_t$. The side and main lobe gains of the antennas are denoted by $G_s$ and $G_m$, respectively. Since the horizontal distance between the aerial \ac{UE} and SBS $i$ is $r_i$, the communication link distance will be $d_i=\sqrt{r_i^2 + (h_d-h_{\textrm{SBS}})^2}$ for all $i \in \mathcal{K}_f$.
%
\subsection{Channel Model}
For the CoMP transmission between \acp{SBS} and the aerial \ac{UE}, we consider a wireless channel that is characterized by both large-scale and small-scale fading. For the large-scale fading, the channel between SBS $i$ and the aerial user is described by the \ac{LoS} and \ac{NLoS} components, which are considered separately along with their probabilities of occurrence \cite{ding2016performance}. \black{This assumption is apropos for such  ground-to-air channels that are often dominated by \ac{LoS} communication \cite{azari2017coexistence}.}
Therefore, the antenna gain plus path loss for each component, i.e., \ac{LoS} and \ac{NLoS}, will be		
\begin{align}
\zeta_v(r_i) = A_v G(r_i) d_i^{-\alpha_v}  = A_v G(r_i) \big(r_i^2 + (h_d-h_{\textrm{SBS}})^2\big)^{-\alpha_v/2},
\end{align}
where $v\in \{l,n \}$, $\alpha_{l}$ and $\alpha_{n}$ are the path loss exponents for the  \ac{LoS} and NLoS links, respectively, with $\alpha_{l}<\alpha_{n}$, and $A_{l}$ and  $A_{n}$ are the path-loss constants at the reference distance $d_i = \SI{1}{m}$ for the \ac{LoS} and \ac{NLoS}, respectively. $G(r_i)$ is the total
antenna directivity gain between SBS $i$ and the aerial UE, which can be written similar to \cite{azari2018reshaping} as
\[
    G(r_i) =\left\{
                \begin{array}{ll}
                  G_m, \quad\quad\quad {\rm for \quad} r_i \in \mathcal{S}_{bs}, \\
                  G_s, \quad\quad\quad {\rm for \quad} r_i \notin \mathcal{S}_{bs}, 
                \end{array}   
              \right.
  \]
\black{where $\mathcal{S}_{bs}$ is formed by all the distances $r_i$ satisfying $h_{\textrm{SBS}} - r_i {\rm tan}(\theta_t + \frac{\theta_B}{2}) < h_d < h_{\textrm{SBS}} - r_i {\rm tan}(\theta_t - \frac{\theta_B}{2})$. 
In other words, the horizontal distance between a SBS and an aerial \ac{UE} along with the antenna height, antenna beamwidth and down-tilt angles, and the altitude of this aerial \ac{UE} determine whether it is served by a mainlobe or sidelobe of a \ac{SBS} antenna.}  
  
For the small-scale fading, we adopt a Nakagami-$m$ model utilized in \cite{azari2017coexistence} and \cite{azari2018reshaping} for the channel gain, whose \ac{PDF} is given by:
\begin{align}
f(\omega) = \frac{2\frac{m}{\eta}^{m} \omega^{2m-1}}{\Gamma(m)} {\rm exp}\big(-\frac{m}{\eta}\omega^2\big), 
\end{align}
\black{where $\eta$ is a controlling spread parameter}, and the fading parameter $m$ is assumed to be an integer for analytical tractability. 
%
Since the communication links between an aerial \ac{UE} and SBSs are \ac{LoS}-dominated, e.g., suburban environments with $h_d>\SI{40}{m}$ \cite{lin2018sky}, it is assumed to have $m>1$.
Given that $\omega \sim $ Nakagami$(m,\eta/m)$, it directly follows that the channel gain power
$\gamma=\omega^2 \sim \Gamma(m,\eta/m)$, where $\Gamma(k,\theta)$ is a Gamma \ac{RV} with $k$ and $\theta$  denoting the shape and scale parameters, respectively. Hence, the \ac{PDF} of channel power gain distribution will be:  
\begin{align}
f(\gamma) = \frac{(\frac{m}{\eta})^{m} \gamma^{m-1}}{\Gamma(m)} {\rm exp}\big(-\frac{m}{\eta}\gamma\big). 
\end{align}

3D blockage is characterized by the fraction $a$ of the total land area occupied by buildings, the mean number of buildings $e$  per \SI{}{km}$^2$, and the buildings height modeled by a Rayleigh \ac{PDF} with a scale parameter $c$. Hence, the probability of  \ac{LoS} of a caching SBS at a distance $r_i$ from the aerial \ac{UE} is given by \cite{series2013propagation}:
\begin{align}
\Pb_{l}(r_i) = \prod_{n=0}^{{\rm max}(p-1,0)}\Bigg[1 - {\rm exp}\Big(- \frac{\big(h_{\textrm{SBS}} + \frac{h(n+0.5)}{m+1}\big)^2}{2c^2}\Big) \Bigg],
\end{align}
where $h=h_d - h_{\textrm{SBS}}$ and $p=\floor{\frac{r_i\sqrt{ae}}{1000}}$. Different terrain structures and environments can be considered by by varying the set of $(a,e,c)$.

\black{As discussed previously, the performance of a high-altitude aerial \ac{UE} is limited by the \ac{LoS} interference they encounter. We hence propose a multi-SBSs cooperative transmission scheme aiming at mitigating inter-cell interference and improving the performance of such high-altitude aerial UEs. Under this setting, in the next section  we develop a novel mathematical framework to characterize the performance of cache-assisted CoMP transmission for cellular-connected UAVs. The performance of UAVs is then contrasted to their terrestrial  counterparts. 
%
} 

%
%

\section{Coverage Probability Analysis}
\label{content-inter}
\black{In this section, we characterize the network performance in terms of coverage probability.} 
We assume that the SBSs have the same transmit power $P_t$. Without loss of generality, we consider a typical aerial \ac{UE} located at $(0,0,h_d) \in\mathbb{R}^3$. Conditioning on having $\kappa$ \acp{SBS} serving a content $f$, the received signal at the aerial \ac{UE} will be:	
\begin{align}
\label{rec-pwr}
P &= \underbrace{\sum_{i=1}^{\kappa}  P(r_i) \omega_i w_i X_f}_{\text{desired signal}}+
\underbrace{\sum_{j\in \Phi_{bf}^{!} \cap \mathcal{B}(0, R_c) } P(u_j) \omega_j w_j Y_j}_{I_{\text{in}}} 
\nonumber \\ 
&+ \underbrace{\sum_{k\in\Phi_{b} \setminus \mathcal{B}(0, R_c)} P(u_k) \omega_k w_k Y_k}_{I_{\text{out}}} +   \quad  Z,
\end{align}
where the first term represents the desired signal from $\kappa$ transmitting \acp{SBS} with $P(r_i) =\sqrt{P_t}  \zeta_v(r_i)^{0.5}$, $v \in \{l,n\}$, $\omega_i$ being the Nakagami-$m$ fading variable of the channel from \ac{SBS} $i$ to the aerial UE, $w_i$ is the precoder used by \ac{SBS} $i$, and $X_f$ is the channel input symbol that is sent by the cooperating \acp{SBS}. The second and third terms represent the in-cluster interference $I_{\text{in}}$, and the out-of-cluster interference $I_{\text{out}}$, respectively. 
$Y_j$ is the transmitted symbol from interfering \ac{SBS} $j$ and 
\[
    P(u_j) =\left\{
                \begin{array}{ll}
                  P_l(u_j) = \sqrt{P_t}  \zeta_l(u_j)^{0.5}, \quad\quad\quad {\rm for \quad LoS}, \\
                 P_n(u_j) = \sqrt{P_t}  \zeta_n(u_j)^{0.5}, \quad\quad\quad {\rm for \quad NLoS},
                \end{array}   
              \right.
  \]
where $u_j$ is the horizontal distance between interfering SBS $j$ and the aerial UE. $Z$ is a circular-symmetric zero-mean complex Gaussian \ac{RV} modeling the background thermal noise.  
In-cluster interference occurs only for the case in which not all of the collaborative SBSs (within distance $R_c$) have the cached content (i.e., $c_f <1$). In this case, the set of interfering SBSs will be characterized by $\Phi_{b} \setminus\Phi_{cf} = \Big\{b_i \in \big\{\Phi_{b} \setminus \mathcal{B}(0, R_c)\big\} \cup \big\{\Phi_{bf}^{!} \cap \mathcal{B}(0, R_c)\big\} \Big\}$. For ease of notation, we denote $\big\{\Phi_{bf}^{!} \cap \mathcal{B}(0, R_c)\big\}$ as $\Phi_{cf}^{!}$.
Otherwise, for $c_f =1$, there is no in-cluster interference and the set of interfering SBSs will then be $\Phi_{b} \setminus\Phi_{cf} = \{b_i \in \Phi_{b} \setminus \mathcal{B}(0, R_c)\}$.

We assume that the \ac{CSI} is available at the serving \acp{SBS}, i.e., the precoder $w_i$ can be set as $\frac{\omega_i^*}{|\omega_i|}$, where $\omega_i^*$ is the complex conjugate of $\omega_i$.  
\black{Assuming that $X_f$, $Y_j$, and $Y_k$ in (\ref{rec-pwr}) are independent zero-mean \acp{RV} of unit variance, and averaging over all \ac{LoS} and \ac{NLoS} configurations for the $\kappa$ caching \acp{SBS}, 
the \ac{SIR} at the aerial \ac{UE} will then be:}
\begin{align}
\label{nakagami}
\Upsilon_{|\boldsymbol{r}_{\kappa}} &= \sum_{o=0}^{\kappa}{\kappa \choose o} \prod_{i=0}^{o}\Pb_{l}(r_i)\prod_{j=o+1}^{\kappa}\Pb_{n}(r_j)\cdot
\nonumber \\
&\frac{ P_t \Bigg|\sum_{i=1}^{o} \zeta_{l}^{1/2}(r_i) \omega_i +  \sum_{j=o+1}^{\kappa} \zeta_{n}^{1/2}(r_j) \omega_j \Bigg|^2}{I_{\text{in}} + I_{\text{out}}}.
\end{align}
In (\ref{nakagami}), we have $\big|\sum_{i=1}^{o} \zeta_{l}^{1/2}(r_i) \omega_i +  \sum_{j=o+1}^{\kappa} \zeta_{n}^{1/2}(r_j) \omega_j \big|^2$ representing the square of a weighted sum of $\kappa$ Nakagami-$m$ \acp{RV}. 
Since there is no known closed-form expression for a weighted sum of Nakagami-$m$ \acp{RV}, we use the Cauchy-Schwarz's inequality to get an upper bound on a square of weighted sum as follows:
\begin{align}
\Bigg|\sum_{i=1}^{o} \zeta_{l}^{1/2}(r_i) \omega_i +  &\sum_{j=o+1}^{\kappa} \zeta_{n}^{1/2}(r_j) \omega_j \Bigg|^2 =
 \Bigg(\sum_{i=1}^{\kappa}Q_i\Bigg)^2 
 \nonumber \\
 \label{cauchy}
 &\leq \kappa \Bigg(\sum_{i=1}^{\kappa}Q_i^2\Bigg),
\end{align}
\black{where $Q_i=\zeta_{v}^{1/2}(r_i)\omega_i$, is a scaled Nakagami-$m$ \ac{RV}, with $v\in \{l,n\}$ and $i \in \mathcal{K}_f$}. Since $\omega_i \sim$ Nakagami$(m,\eta/m)$, from the scaling property of the Gamma \ac{PDF}, $Q_i^2 \sim \Gamma\big(k_i=m,\theta_i=2\eta\zeta_{v}(r_i)/m\big)$. To get a statistical equivalent \ac{PDF} of a sum of $\kappa$ Gamma \acp{RV} $Q_i$ with different shape parameters $\theta_i$, we adopt the method of sum of Gammas second-order moment match proposed in \cite[Proposition 8]{heath2011multiuser}. It is shown that the equivalent Gamma distribution, denoted as $J \sim \Gamma(k,\theta)$, with the same first and second-order moments has the parameters
$k = \Big(\sum_i{k_i\theta_i}\Big)^2/\sum_i{k_i\theta_i^2}$	and
$\theta = \sum_i{k_i\theta_i^2}/\sum_i k_i\theta_i$.
To showcase the accuracy of the second-order moment approximation in our case, for an arbitrary realization of the network, we plot in Fig.~\ref{gamma-effect} the \ac{PDF} of the equivalent channel gain. As evident from the plot, approximating a sum of $\kappa$ Gamma \acp{RV} with an equivalent Gamma \ac{RV} whose parameters are 
\begin{align}
\label{equiv-gamma}
k=\frac{m\big(\sum_i{\zeta_{v}(r_i)}\big)^2}{\sum_i{\big(\zeta_{v}(r_i)\big)^2}} 
\quad \quad \text{and} \quad \quad
 \theta=\frac{\eta\sum_i{\zeta_{v}(r_i)}}{m\sum_i  \zeta_{v}(r_i)},
  \end{align}
  is quite accurate. For tractability, we further upper bound the shape parameter $k$ in (\ref{equiv-gamma}):
\begin{align}
\label{approx-k} 
k= m \frac{\Big(\sum_{i} \zeta_v(r_i)\Big)^2}{\sum_{i} \Big(\zeta_v(r_i)\Big)^2}  \leq m\frac{\kappa\sum_{i} \Big(\zeta_v(r_i)\Big)^2}{\sum_{i} \Big(\zeta_v(r_i)\Big)^2} = m\kappa,
\end{align}
where $m\kappa$ is integer.
%
\begin{figure} [!t] 	
\centering
\includegraphics[width=0.38\textwidth]{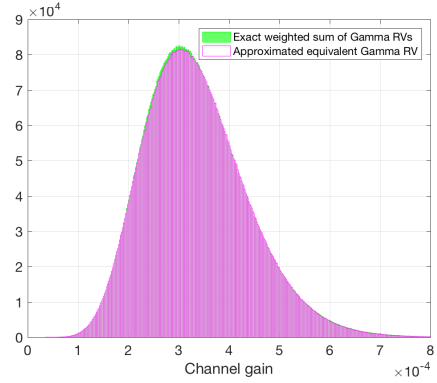}		
\caption {Monte Carlo simulation of the PDF of the equivalent gain of channels between cooperating SBSs and the aerial UE, including path loss and fading. A PPP realization of density $\lambda_b=20$ SBS/\SI{}{km}$^2$ is run for a simulated area of \SI{20}{km}$^2$ with $m=3$ and $R_c=\SI{200}{m}$.}
\label{gamma-effect}
\vspace{-0.5 cm}
\end{figure}

\black{We next derive an upper bound expression on the coverage probability. The novelty of our approach is that it adopts an upper bound on the square of summed Nakagami-$m$ \acp{RV} and second-order match approximation of Gamma \acp{RV}. This allows us to get a closed-form upper bound on the coverage probability, which is difficult to obtain exactly.}

\black{The coverage probability of the aerial \ac{UE} conditioned on the serving distances $\boldsymbol{r}_{\kappa}$ is then expressed as} 
\begin{align}
&\Pb_{\textrm{cov}|\boldsymbol{r}_{\kappa}}= \Pb\big[\Upsilon_{|\boldsymbol{r}_{\kappa}}>\vartheta\big]
\\
&\approx \sum_{o=0}^{\kappa} {\kappa \choose o} \prod_{i=0}^{o}\Pb_{l}(r_i)\prod_{j=o+1}^{\kappa}\Pb_{n}(r_j)  
\Pb\Big(\frac{\kappa P_t \big(\sum_{i=1}^{\kappa}Q_i\big)^2 }{I_{\text{in}} + I_{\text{out}}}>\vartheta\Big),
\nonumber \\
\label{pc-rk}
 &= \sum_{o=0}^{\kappa} {\kappa \choose o} \prod_{i=0}^{o}\Pb_{l}(r_i)\prod_{j=o+1}^{\kappa}\Pb_{n}(r_j) \Pb\Big(\frac{\kappa P_tJ}{I_{\text{in}} + I_{\text{out}}}>\vartheta\Big),
\end{align}
where $\vartheta$ is the $\sir$ threshold. 
\black{The unconditional coverage probability can be obtained as a function of the system parameters, as stated
formally in the following theorem.}
\begin{theorem}
\label{th-one}
The coverage probability is given by:
\begin{align}
\label{cov-prob-theory}		
\Pb_{{\rm cov}} = \sum_{\kappa=1}^{\infty} \Pb(n=\kappa) \int_{\boldsymbol{r_\kappa}=\boldsymbol{0}}^{\boldsymbol{R_c}} \Pb_{{\rm cov}|\boldsymbol{r}_{\kappa}} \prod_{i=0}^{\kappa} \frac{2r_i}{R_c^2} \dd{ \boldsymbol{r}_{\kappa}},
\end{align}
where $\Pb_{{\rm cov}|\boldsymbol{r}_{\kappa}}$ is the conditional coverage probability in (\ref{cond-cov}) (at the top of next page), where $\varpi = \vartheta/\kappa P_t\theta$. 
\end{theorem}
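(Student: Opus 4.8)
The plan is to evaluate the inner probability in \eqref{pc-rk} in closed form and then de-condition on the serving distances and the cluster size. Fix a cluster of $\kappa$ caching SBSs together with an LoS/NLoS split of those $\kappa$ links. Having replaced the weighted sum of Nakagami powers by the equivalent Gamma variable $J$ and then upper-bounded its shape to the integer $k=m\kappa$ in \eqref{approx-k} --- a step that only inflates the tail probability, since a Gamma variate with larger shape and fixed scale is stochastically larger, so the upper-bound character of the derivation is preserved --- I would invoke the finite-sum form of the Gamma complementary CDF, $\Pb(J>g)=e^{-g/\theta}\sum_{n=0}^{k-1}(g/\theta)^n/n!$. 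Conditioning on the aggregate interference $I=I_{\text{in}}+I_{\text{out}}$ and setting $g=\vartheta I/(\kappa P_t)$, so that $g/\theta=\varpi I$ with $\varpi=\vartheta/(\kappa P_t\theta)$ and $\theta$ the equivalent scale from \eqref{equiv-gamma}, this gives $\Pb\big(\kappa P_t J/I>\vartheta \mid I\big)=e^{-\varpi I}\sum_{n=0}^{k-1}(\varpi I)^n/n!$.

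Next I would average over the interference. Using $\Eb[I^n e^{-\varpi I}]=(-1)^n\mathcal{L}_I^{(n)}(\varpi)$, where $\mathcal{L}_I(s)=\Eb[e^{-sI}]$ and the superscript denotes the $n$-th derivative, the inner probability becomes $\sum_{n=0}^{k-1}\frac{(-\varpi)^n}{n!}\mathcal{L}_I^{(n)}(\varpi)$. Because the in-cluster interferers $\Phi_{cf}^{!}$ and the out-of-cluster interferers $\Phi_b\setminus\mathcal{B}(0,R_c)$ occupy disjoint regions of an independently thinned PPP, $I_{\text{in}}$ and $I_{\text{out}}$ are independent and $\mathcal{L}_I(s)=\mathcal{L}_{I_{\text{in}}}(s)\,\mathcal{L}_{I_{\text{out}}}(s)$. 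Each factor follows from the PGFL of a PPP: marking every interferer with its LoS/NLoS state (probabilities $\Pb_l(u)$, $\Pb_n(u)$) and with its Nakagami-$m$ power $\gamma\sim\Gamma(m,\eta/m)$, whose MGF gives $\Eb_\gamma[e^{-sP_t\zeta_v(u)\gamma}]=(1+sP_t\zeta_v(u)\eta/m)^{-m}$, yields $\mathcal{L}_{I_{\text{in}}}(s)=\exp\big(-(1-c_f)\lambda_b\int_0^{R_c}\sum_{v\in\{l,n\}}\Pb_v(u)[1-(1+sP_t\zeta_v(u)\eta/m)^{-m}]\,2\pi u\,\dd u\big)$, and $\mathcal{L}_{I_{\text{out}}}(s)$ the same expression with intensity $\lambda_b$ and the radial integral taken over $(R_c,\infty)$.

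Substituting these Laplace transforms back into \eqref{pc-rk} and retaining the binomial LoS/NLoS sum $\sum_{o=0}^{\kappa}\binom{\kappa}{o}\prod_i\Pb_l(r_i)\prod_j\Pb_n(r_j)$ --- with $\varpi$, $k$, and the $\zeta_v$ chosen appropriately for each split --- gives the conditional coverage probability $\Pb_{\text{cov}|\boldsymbol{r}_{\kappa}}$ of \eqref{cond-cov}. The unconditional result \eqref{cov-prob-theory} then follows from the law of total probability: averaging $\Pb_{\text{cov}|\boldsymbol{r}_{\kappa}}$ over the joint serving-distance density $\prod_i 2r_i/R_c^2$ of \eqref{serv-dist}, and then over the cluster-size PMF $\Pb(n=\kappa)$ of \eqref{cond-k}.

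The main obstacle is the $n$-th derivative of the product $\mathcal{L}_{I_{\text{in}}}(s)\mathcal{L}_{I_{\text{out}}}(s)$ at $s=\varpi$: each factor is the exponential of an $s$-dependent integral, so the general Leibniz rule together with Fa\`a di Bruno's formula for higher derivatives of a composition is needed, and the resulting radial integrals of the form $\int \zeta_v(u)^n(1+sP_t\zeta_v(u)\eta/m)^{-m-n}u\,\dd u$ have no elementary antiderivative and are left in integral form, which is consistent with the way \eqref{cond-cov} is presented. Everything else --- the Gamma tail expansion, the Laplace-transform differentiation identity, the PGFL evaluation, and the final de-conditioning integrals --- is routine once the integer-shape reduction \eqref{approx-k} and the independence of $I_{\text{in}}$ and $I_{\text{out}}$ are in place.
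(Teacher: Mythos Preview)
Your proposal is correct and follows essentially the same route as the paper: Gamma CCDF expansion with integer shape $k=m\kappa$, conversion to derivatives of the interference Laplace transform, PGFL evaluation with the LoS/NLoS marks and Gamma power MGF, and final de-conditioning over $\boldsymbol{r}_{\kappa}$ and $\kappa$. Your presentation is in fact slightly more careful than the paper's in two places---you justify the integer-shape replacement via stochastic ordering, and you retain the spread parameter $\eta$ in the interferer MGF $(1+sP_t\zeta_v(u)\eta/m)^{-m}$, which the paper silently drops in its $\delta_l,\delta_n$---and you flag the Leibniz/Fa\`a di Bruno bookkeeping for the higher derivatives, which the paper leaves implicit.
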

\begin{proof}
We proceed to obtain the coverage probability as follows:
\begin{align}
&\Pb\Big(\frac{\kappa P_tJ}{I_{\text{in}} + I_{\text{out}}}>\vartheta\Big) =  \Pb\Big(\kappa P_tJ>\vartheta 
\big( I_{\text{in}} + I_{\text{out}}\big)\Big)
\nonumber \\
 &= \Eb_{I_{\text{in}},I_{\text{out}}} \Bigg[  \Pb\Big(\kappa P_tJ>\vartheta\big( I_{\text{in}} + I_{\text{out}}\big)\Big) \Bigg]
 \nonumber \\
  &= \Eb_{I_{\text{in}},I_{\text{out}}} \Bigg[  \Pb\Big(J>\frac{\vartheta}{\kappa P_t}\big( I_{\text{in}} + I_{\text{out}}\big)\Big) \Bigg]
 \nonumber \\
 &\overset{(a)}{\approx} \Eb_{I_{\text{in}},I_{\text{out}}} \Big[ \sum_{k=0}^{k-1}  \frac{(\vartheta/\kappa P_t\theta)^k}{k!}\big( I_{\text{in}} + I_{\text{out}}\big)^k
{\rm exp}\Big(-\frac{\vartheta}{\kappa P_t\theta}\big( I_{\text{in}} + I_{\text{out}}\big)\Big) \Big]
   \nonumber \\
  \label{prob-y2}
&\overset{(b)}{=} \Eb_{I_{\text{in}},I_{\text{out}}} \Big[ \sum_{k=0}^{k-1}  \frac{(-\varpi)^k}{k!}
\frac{d^k}{d\vartheta^k} \Lc_{I_{\text{in}}+I_{\text{out}}|\boldsymbol{r}_{\kappa}}(\varpi)  \Big],
\end{align}
where (a) follows from the \ac{PDF} of Gamma \ac{RV} $J$ with parameters $\theta$ and $k$ given in (\ref{equiv-gamma}) and (\ref{approx-k}), respectively. (b) follows from the fact that $\varpi = \vartheta/\kappa P_t\theta$ along with the Laplace transform definition of the \ac{RV} $I_{\text{in}}+I_{\text{out}}$.  Next, we derive the Laplace transform of interference:
\newcounter{MYtempeqncnt}
\begin{figure*}[!t]
\normalsize
\setcounter{MYtempeqncnt}{\value{equation}}
\begin{align}
&\Pb_{\textrm{cov}|\boldsymbol{r}_{\kappa}}=
\sum_{o=0}^{\kappa} {\kappa \choose o} \prod_{i=0}^{o}\Pb_{l}(r_i)\prod_{i=o+1}^{\kappa}\Pb_{n}(r_i)
\sum_{k=0}^{k-1}  \frac{(-\varpi)^k}{k!}\frac{\partial^k}{\partial\varpi^k}  \cdot
\nonumber \\
\label{cond-cov}
&{\rm exp}\Bigg(-2\pi \lambda_{bf}^{\circ}\int_{v=0}^{R_c}\Big(1 -
         \delta_l\Pb_{l}(v) 
        - \delta_n \Pb_{n}(v)
        \Big)v\dd{v}\Bigg)   
{\rm exp}\Bigg(-2\pi \lambda_p\int_{v=R_c}^{\infty}\Big(1 -
        \delta_l\Pb_{l}(v) 
        - \delta_n \Pb_{n}(v)
        \Big)v\dd{v}\Bigg).
\end{align}
\setcounter{equation}{\value{MYtempeqncnt}+1}
\hrulefill
\end{figure*}
\begin{align}
&\Lc_{I_{\text{in}}+I_{\text{out}}|\boldsymbol{r}_{\kappa}}(\varpi)  = \Eb_{I_{\text{in}},I_{\text{out}}} \Big[{\rm exp}\big(-\varpi ( I_{\text{in}} + I_{\text{out}})\big) \Big]
 \nonumber \\
 & = \mathbb{E} \Bigg[
 e^{-\varpi\sum_{j \in \Phi_{cf}^{!}} \gamma_j  P(u_j)^2}
 e^{-\varpi\sum_{j \in \Phi_{b} \setminus \mathcal{B}(0, R_c) } \gamma_j  P(u_j)^2}  
  \Bigg] 
 \nonumber \\ 
  &= \mathbb{E}_{\Phi_b} 
  \Big[  \prod_{j \in \Phi_{cf}^{!}}  \mathbb{E}_{\gamma_j}  e^{-\varpi\gamma_j  P(u_j)^2} 
  \prod_{j \in \Phi_{b} \setminus \mathcal{B}(0, R_c)}  \mathbb{E}_{\gamma_j}  e^{-\varpi \gamma_j  P(u_j)^2}
  \Big] 
  \nonumber \\
  &\overset{(a)}{=} \mathbb{E}_{\Phi_b} \Bigg[
\prod_{j \in \Phi_{cf}^{!}}  \Big[ \Big(1 + \frac{\varpi P_l(u_j)^2 }{m} \Big)^{-m} \Pb_{l}(u_j)  + 
 \nonumber \\
&\Big(1 + \frac{\varpi P_n(u_j)^2 }{m} \Big)^{-m} \Pb_{n}(u_j) \Big]
 \cdot
 \nonumber \\
&\prod_{j \in \Phi_{b} \setminus \mathcal{B}(0, R_c)} 
 \Big[ \Big(1 + \frac{\varpi P_l(u_j)^2 }{m} \Big)^{-m} 
 \nonumber \\
& \cdot\Pb_{l}(u_j) + \Big(1 + \frac{\varpi P_n(u_j)^2 }{m} \Big)^{-m}  \Pb_{n}(u_j)   \Big]
    \Bigg]
 \nonumber
 \\
    \label{LT_c0}  
        &\overset{(b)}{=} {\rm exp}\Bigg(-2\pi \lambda_{bf}^{\circ}\int_{v=0}^{R_c}\Big(1 -
         \delta_l\Pb_{l}(v) 
        - \delta_n \Pb_{n}(v)
        \Big)v\dd{v}\Bigg)  \cdot
    \nonumber \\        
        & {\rm exp}\Bigg(-2\pi \lambda_p\int_{v=R_c}^{\infty}\Big(1 -
        \delta_l\Pb_{l}(v) 
        - \delta_n \Pb_{n}(v)
        \Big)v\dd{v}\Bigg),
\end{align}
%
where (a) follows from the \ac{MGF} of gamma distribution. (b) follows from the probability
generating functional (PGFL) of \ac{PPP}, the substitution $\delta_l=\big(1 + \frac{\varpi P_l(v)^2 }{m} \big)^{-m}$ and $\delta_n=\big(1 + \frac{\varpi P_n(v)^2 }{m} \big)^{-m}$, and Cartesian to polar coordinates conversion. 
By using (\ref{pc-rk}), (\ref{prob-y2}), and (\ref{LT_c0}) along with the \ac{PMF} of the number of caching \acp{SBS} given in (\ref{cond-k}), the coverage probability is obtained. 
\end{proof}
\black{Important insights about the coverage probability can be obtained from (\ref{cov-prob-theory}). 
First, if the collaboration distance $R_c$ or the caching probability $c_f$ increases, both the probability $\Pb(n=\kappa)$ and the integrand value in (\ref{cov-prob-theory}) increase, and, thus,  the coverage probability grows accordingly.  
Furthermore, the effect of the spatial \ac{SBS} density $\lambda_b$ on the coverage probability is two-fold. On the  one hand, the average number of caching \acp{SBS} increases with $\lambda_b$ as characterized by $\Pb(n=\kappa)$, which results in a higher desired signal power. On the other hand, this advantage is counter-balanced by the increase in interference power when $\lambda_b$ increases, as captured in the decaying exponential functions in (\ref{cond-cov}).}

\section{Numerical Results}
 \begin{figure} [!t]	
\centering
\includegraphics[width=0.35\textwidth]{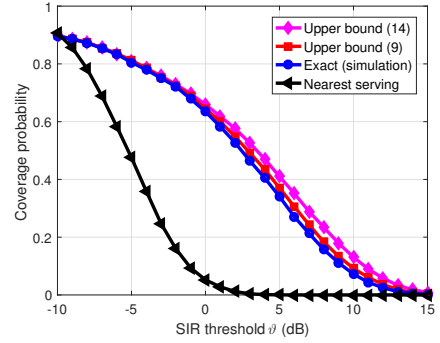}		
\caption {The derived upper bound on the coverage probability is plotted versus the \ac{SIR} threshold $\vartheta$.}
\label{cov-prob}
\vspace{-0.35cm}
\end{figure}
\begin{table}[ht]
\caption{Simulation Parameters} 
\centering 
\begin{tabular}{c c  c} 
\hline\hline 
Description & Parameter & Value  \\ [0.5ex] 
\hline 
LoS path-loss exponent& $\alpha_{l}$& 2.09 \\ 
NLoS path-loss exponent& $\alpha_{n}$& 3.75 \\ 
LoS path-loss constant& $A_{l}$& \SI{-41.1}{dB} \\ 
NLoS path-loss constant& $A_{n}$ & \SI{-32.9}{dB} \\ 
Antenna main lobe gain&$G_m$&\SI{10}{dB}\\		
Antenna side lobe gain&$G_s$&\SI{-3.01}{dB}\\
Nakagami fading parameter&$m$&3\\
Nakagami spreading factor&$\eta$&2\\
SBS antenna height&$h_{\textrm{SBS}}$ &\SI{30}{m}\\
Aerial \ac{UE} altitude&$h_d$&\SI{100}{m}\\
Area fraction occupied by buildings &$a$& 0.3\\
Mean number of buildings &$e$&200 per \SI{}{km}$^2$\\
Buildings height Rayleigh parameter &$c$&15\\
Collaboration distance&$R_c$&\SI{200}{m}\\
Density of \acp{SBS}&$\lambda_{b}$&20 \acp{SBS}/\SI{}{km}$^2$ \\
$\sir$ threshold&$\vartheta$&\SI{0}{\deci\bel}\\
Down-tilt angle &$\theta_t$& $8^\circ$\\
Vertical beamwidth &$\theta_B$& $30^\circ$\\
Content caching probability &$c_f$& 1\\
\hline 
\end{tabular}
\label{ch4:table:sim-parameter} 
\vspace{-0.5cm}
\end{table}

For our simulations, we consider a network having the  parameter values indicated in Table \ref{ch4:table:sim-parameter}. Monte Carlo simulations are used to validate the developed mathematical model.

In Fig.~\ref{cov-prob}, we show the theoretical upper bound on the coverage probability obtained in (\ref{cov-prob-theory}), simulation of the exact coverage probability, and simulation of the upper bound based on Cauchy's inequality in (\ref{cauchy}). The figure shows that the  Cauchy's inequality-based upper bound is remarkably tight. Moreover, although the upper bound on the coverage probability obtained in (\ref{cov-prob-theory}) is less tight, it still represents a reasonably tractable bound on the exact coverage probability. Hence, (\ref{cov-prob-theory}) can be treated as a proxy of the exact result. Recall that (\ref{cauchy}) is based on an upper bound on a square of a sum of Nakagami-$m$ \acp{RV} while the expression in (\ref{cov-prob-theory}) goes further by two more steps. First, we approximate the sum of Gamma \acp{RV} to an equivalent Gamma \ac{RV}. Then, we approximatie the shape parameter of the yielded Gamma \ac{RV} to an integer (\ref{approx-k}).
Fig.~\ref{cov-prob} also compares the coverage probability of the proposed \ac{CoMP} transmission scheme with the nearest serving \ac{SBS} transmission scheme. As evident from the plot, allowing multiple transmission of the same content from neighboring \acp{SBS} significantly enhances the coverage probability, e.g., from $10\%$ to $70\%$ at $\vartheta=$ \SI{0}{dB} \black{for an average of only 2.5 serving \acp{SBS}}.

 \begin{figure} [!tp]	
\centering
\includegraphics[width=0.32\textwidth]{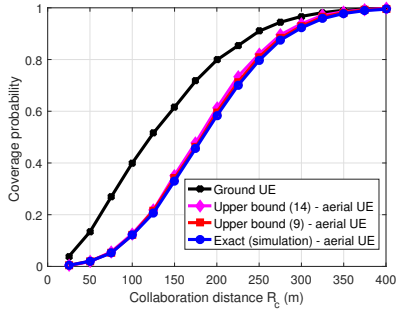}		
\caption {Coverage probability versus the collaboration distance $R_c$ for the aerial and ground UEs.}
\label{collab-dist}
\vspace{-0.4cm}
\end{figure}
Fig.~\ref{collab-dist} shows the prominent effect of the collaboration distance $R_c$ on the coverage probability for two UEs, namely, ground UE, and high-altitude aerial UE. We can see that for both UEs, the coverage probability monotonically increases with $R_c$ since more \acp{SBS} cooperate to send a content when $R_c$ increases. Although the achievable coverage probability of the aerial \ac{UE} is always upper bounded by that of the ground \ac{UE} (due to the down-tilt of the SBSs' antennas), we can see that the rate of improvement of the coverage probability with $R_c$, i.e., the slope, is higher for the aerial UE. This can be interpreted by the fact that increasing the collaboration distance yields more \ac{LoS} signals within the desired signal side and mitigates them from the interference. In contrast, for the ground UE, the transmission is always dominated by NLoS signals and Rayleigh fading. 

To show the effect of content availability, i.e., content caching, in Fig.~\ref{optimal-h}, we plot the coverage probability versus the \ac{SIR} threshold $\vartheta$ for different $c_f$. 
We observe that the coverage probability decreases as the caching probability $c_f$ decreases. This stems from the fact that the average number of caching \acp{SBS} decreases as $c_f$ decreases. This in turn reduces the cooperative transmission gain. Note that the value $c_f$ is, in fact, a parameter that can be designed based on various factors such as the memory size of SBSs, the popularity of files, and file library size.

\section{Conclusion} 
In this paper, we have proposed a novel framework for cooperative transmission and probabilistic caching suitable for high-altitude aerial UEs. In order to obtain analytically tractable expressions, we have employed Cauchy's  inequality and a second-order moment approximation of Gamma \acp{RV} to derive a closed-form upper bound on the content coverage probability. We have then shown that the derived bound is considerably tight. 
\black{We have also shown that  exploiting \ac{CoMP} transmission with an average of 2.5 serving \acp{SBS} per cluster  
significantly improves the coverage probability, e.g., from $10\%$ to $70\%$ at \SI{0}{dB} \ac{SIR} threshold}. Moreover, comparing the performance of an aerial \ac{UE} to a ground UE, our results have shown that the coverage probability of an aerial \ac{UE} is always upper bounded by that of a ground \ac{UE} owing to the down-tilted antenna pattern.

 \begin{figure} [!t]	
\centering
\includegraphics[width=0.35\textwidth]{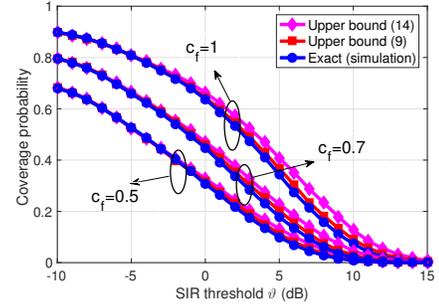}		
\caption {Coverage probability versus \ac{SIR} threshold $\vartheta$ for different content caching probability $c_f$.}
\label{optimal-h}
\vspace{-0.4cm}
\end{figure}

\bibliographystyle{IEEEtran}
\bibliography{bibliography}

\begin{thebibliography}{10}
\providecommand{\url}[1]{#1}
\csname url@samestyle\endcsname
\providecommand{\newblock}{\relax}
\providecommand{\bibinfo}[2]{#2}
\providecommand{\BIBentrySTDinterwordspacing}{\spaceskip=0pt\relax}
\providecommand{\BIBentryALTinterwordstretchfactor}{4}
\providecommand{\BIBentryALTinterwordspacing}{\spaceskip=\fontdimen2\font plus
\BIBentryALTinterwordstretchfactor\fontdimen3\font minus
  \fontdimen4\font\relax}
\providecommand{\BIBforeignlanguage}[2]{{%
\expandafter\ifx\csname l@#1\endcsname\relax
\typeout{** WARNING: IEEEtran.bst: No hyphenation pattern has been}%
\typeout{** loaded for the language `#1'. Using the pattern for}%
\typeout{** the default language instead.}%
\else
\language=\csname l@#1\endcsname
\fi
#2}}
\providecommand{\BIBdecl}{\relax}
\BIBdecl

\bibitem{8473483}
M.~Vondra, M.~Ozger, D.~Schupke, and C.~Cavdar, ``Integration of satellite and
  aerial communications for heterogeneous flying vehicles,'' \emph{IEEE
  Network}, vol.~32, no.~5, pp. 62--69, September 2018.

\bibitem{challita2018deep}
U.~Challita, W.~Saad, and C.~Bettstetter, ``Deep reinforcement learning for
  interference-aware path planning of cellular-connected {UAV}s,'' in
  \emph{Proc. of IEEE International Conference on Communications (ICC)}, Kansas
  City, USA, May 2018, pp. 1--7.

\bibitem{mozaffari2018tutorial}
M.~Mozaffari, W.~Saad, M.~Bennis, Y.-H. Nam, and M.~Debbah, ``A tutorial on
  {UAV}s for wireless networks: Applications, challenges, and open problems,''
  \emph{arXiv preprint arXiv:1803.00680}, 2018.

\bibitem{mozaffari2016unmanned}
M.~Mozaffari, W.~Saad, M.~Bennis, and M.~Debbah, ``Unmanned aerial vehicle with
  underlaid device-to-device communications: Performance and tradeoffs,''
  \emph{IEEE Transactions on Wireless Communications}, vol.~15, no.~6, pp.
  3949--3963, June 2016.

\bibitem{zeng2018cellular15}
Y.~Zeng, J.~Lyu, and R.~Zhang, ``Cellular-connected {UAV}: Potential,
  challenges and promising technologies,'' \emph{IEEE Wireless Communications},
  October 2018.

\bibitem{qualcomm2017unmanned}
Qualcomm, ``LTE Unmanned aircraft systems' trial report,'' 2017.

\bibitem{azari2017coexistence}
M.~M. Azari, F.~Rosas, A.~Chiumento, and S.~Pollin, ``Coexistence of
  terrestrial and aerial users in cellular networks,'' in \emph{Proc. of IEEE
  Globecom Workshops (GC Wkshps)}, Singapore, Dec 2017, pp. 1--6.

\bibitem{azari2018reshaping}
M.~M. Azari, F.~Rosas, and S.~Pollin, ``Reshaping cellular networks for the
  sky: Major factors and feasibility,'' in \emph{Proc. of IEEE International
  Conference on Communications (ICC)}, Kansas City, USA, May 2018, pp. 1--7.

\bibitem{lin2018sky}
X.~Lin, V.~Yajnanarayana, S.~D. Muruganathan, S.~Gao, H.~Asplund, H.-L.
  Maattanen, M.~Bergstrom, S.~Euler, and Y.-P.~E. Wang, ``The sky is not the
  limit: {LTE} for unmanned aerial vehicles,'' \emph{IEEE Communications
  Magazine}, vol.~56, no.~4, pp. 204--210, April 2018.

\bibitem{van2016lte}
B.~Van~der Bergh, A.~Chiumento, and S.~Pollin, ``{LTE} in the sky: trading off
  propagation benefits with interference costs for aerial nodes,'' \emph{IEEE
  Communications Magazine}, vol.~54, no.~5, pp. 44--50, May 2016.

\bibitem{zhang2018cellular55}
S.~Zhang, Y.~Zeng, and R.~Zhang, ``Cellular-enabled {UAV} communication: A
  connectivity-constrained trajectory optimization perspective,'' \emph{arXiv
  preprint arXiv:1805.07182}, 2018.

\bibitem{8449213}
C.~Zhu and W.~Yu, ``Stochastic modeling and analysis of user-centric network
  {MIMO} systems,'' \emph{IEEE Transactions on Communications}, pp. 1--1, 2018.

\bibitem{haenggi2012stochastic}
M.~Haenggi, \text{Stochastic geometry for wireless networks}.\hskip 1em
  plus 0.5em minus 0.4em\relax \textit{Cambridge University Press}, 2012.

\bibitem{ding2016performance}
M.~Ding, P.~Wang, D.~L{\'o}pez-P{\'e}rez, G.~Mao, and Z.~Lin, ``Performance
  impact of {LoS} and {NLoS} transmissions in dense cellular networks,''
  \emph{IEEE Transactions on Wireless Communications}, vol.~15, no.~3, pp.
  2365--2380, March 2016.

\bibitem{series2013propagation}
P.~Series, ``Propagation data and prediction methods required for the design of
  terrestrial broadband radio access systems operating in a frequency range
  from 3 to 60 {GHz},'' 2013.

\bibitem{heath2011multiuser}
R.~W. Heath~Jr, T.~Wu, Y.~H. Kwon, and A.~C. Soong, ``Multiuser {MIMO} in
  distributed antenna systems with out-of-cell interference,'' \emph{IEEE
  Transactions on Signal Processing}, vol.~59, no.~10, pp. 4885--4899, Oct
  2011.

\end{thebibliography}
\end{document}